\documentclass[aps,prl,twocolumn,longbibliography,superscriptaddress]{revtex4-1}
\usepackage{amsmath}
\usepackage{amsthm}
\usepackage{latexsym}
\usepackage{amssymb}
\usepackage{bm}
\usepackage{bbm}
\usepackage{appendix}
\usepackage{graphics,epstopdf}
\usepackage{physics}
\usepackage{color}
\usepackage[colorlinks=true,linkcolor=blue,citecolor=green,plainpages=false,pdfpagelabels]{hyperref}
\usepackage[normalem]{ulem}
\usepackage{newlfont}
\usepackage{amsfonts}
\usepackage{amsthm}
\usepackage{times}

\usepackage{graphicx}
\usepackage{epsfig}
\usepackage{mathrsfs}
\usepackage[thinc]{esdiff}
\usepackage{caption}
\usepackage{subcaption}

\usepackage{times}

\newtheorem{corollary}{Corollary}

\newtheorem{theorem}{Theorem}

\DeclareOldFontCommand{\rm}{\normalfont\rmfamily}{\mathrm}

\usepackage{bigints}

\begin{document}

\title{Exact Quantum Speed Limits}

\author{Arun K. Pati}
\email{akpati@iiit.ac.in}
\affiliation{Centre for Quantum Science and Technology (CQST), International Institute of Information Technology, Hyderabad-500032, India}

\author{Brij Mohan}
\email{brijhcu@gmail.com}
\affiliation{Department of Physical Sciences, Indian Institute of Science Education and Research (IISER), Mohali-140306, India}

\author{Sahil}
\email{sahilmd@imsc.res.in}
\affiliation{Optics and Quantum Information Group, The Institute of Mathematical Sciences, CIT Campus, Taramani, Chennai 600113, India}
\affiliation{Homi Bhabha National Institute, Training School Complex, Anushakti Nagar, Mumbai 400085, India}

\author{Samuel L. Braunstein}
 \email{sam.braunstein@york.ac.uk}
\affiliation{Computer Science, University of York, York YO10 5GH, UK}

\begin{abstract}
The traditional quantum speed limits are not attainable for many physical processes, as they tend to be loose and fail to determine the exact time taken by quantum systems to evolve. To address this, we derive exact quantum speed limits for the unitary dynamics of pure-state quantum system that outperform the existing quantum speed limits. Using these exact quantum speed limits, we can precisely estimate the evolution time for two- and higher-dimensional quantum systems. Additionally, for both finite- and infinite-dimensional quantum systems, we derive an improved Mandelstam-Tamm bound for pure states and show that this bound always saturates for any unitary generated by self-inverse Hamiltonians. Furthermore, we show that our speed limits establish an upper bound on the quantum computational circuit complexity. These results will have a significant impact on our understanding of quantum physics as well as rapidly developing quantum technologies, such as quantum computing, quantum control and quantum thermal machines.

\end{abstract}

\maketitle

{ \it Introduction --} Quantum mechanics imposes fundamental limitations on the rate at which quantum systems evolve over time when subjected to external fields or environments, and these limitations are often referred to as quantum speed limits (QSLs). Quantum speed limits provide a lower bound on the time required to transport an initial quantum state to a final quantum state through a physical process. The first QSL was discovered by Mandelstam and Tamm (MT)~\cite{Mandelstam1945}, which provides an operational interpretation of the widely discussed ``energy-time uncertainty relation" as it sets the intrinsic timescale for a quantum system undergoing unitary evolution~\cite{Aharonov1961}. Several years later, another QSL was discovered by Margolus and Levitin (ML), which is distinct from the MT bound and depends on the mean of the generator of time evolution, rather than its variance~\cite{Margolus1998}. Later on, Levitin and Toffoli demonstrated that the unified bound is achievable under suitable evolution~\cite{Levitin2009}. This unified bound is given by
\begin{equation}
    T \geq T_{\text{QSL}}=\max\{T^{\text{MT}},T^{\text{ML}}\},
\end{equation}
where $T^{\text{MT}} =
\hbar \,\Theta_{\rm 0T}/\Delta H$ 
and $T^{\text{ML}} =
\hbar\, \Theta_{\rm 0T}/\langle H \rangle$ are the MT and ML bounds, respectively. Here, we express these bounds in terms
of the Hilbert space angle $\Theta_{\rm 0T}\equiv\arccos(|\langle\psi_0|\psi_T\rangle|)$ 
between the initial state $|\psi_0\rangle$ and the final state $|\psi_T\rangle$ (instead of twice this angle, e.g., \cite{Anandan1990}), $\Delta H$ is the variance in the Hamiltonian and $\langle H \rangle$ is its expectation value both determined for the initial state of the quantum system. Subsequent developments have also generalized the Mandelstam-Tamm bound to time-dependent Hamiltonians~\cite{Pfeifer1993}.

 Recently it has been shown that the evolution speed of a physical system is constrained not only by the energetic cost but also by the topological structure of the underlying dynamics~\cite{Tan2023}. In Ref.~\cite{Mondal2016}, theoretical proposals were presented to measure quantum speed limits. Interestingly, the MT and ML bounds have recently been measured in matter wave interferometry~\cite{Ness}. QSLs were initially studied for the unitary dynamics of quantum systems described by pure or mixed states~\cite{Mandelstam1945,Margolus1998,Anandan1990,Bhattacharyya1983,Vaidman1992,Braunstein996,Campaioli2018,Campaioli2019,Shao2020,Ness2022,Taddei2013,Braunstein1994,Bagchi2022}. Since then, they have been extensively studied for various types of dynamics, including many-body dynamics~\cite{Bukov2019,Brouzos2015}, non-Hermitian dynamics~\cite{Impens2021,Uzdin2012,Alipour2020,Dimpi2022}, open quantum dynamics~\cite{Lan2022,Nakajima2022,Pires2016,Brody2019,Funo2019,Das2021,Campo2013,Deffner2013}, and arbitrary quantum dynamics~\cite{Dimpi2022}. Previously, QSLs have been studied in the Schr\"odinger picture. However, recently QSLs have been studied in the Heisenberg picture, which establishes an upper bound on the rate at which the expectation values of observables change over time~\cite{B.Mohan2022,Pintos2022,Hamazaki2022}. It has also been found that the celebrated Mandelstam-Tamm bound is a special case of the quantum speed limit of observables~\cite{B.Mohan2022}. 

 The study of QSLs not only enhances our theoretical comprehension of fundamental physics but also carries immense practical significance in the rapidly evolving field of quantum technologies, including but not limited to, quantum computation~\cite{Aifer2022}, quantum metrology~\cite{Zwierz2010,Campbell2018}, quantum optimal control~\cite{Caneva2009}, quantum communication~\cite{Murphy2010} and quantum batteries~\cite{B.Mohan2022,Campaioli2017,Mohan2021,Gyhm2022}. The original QSLs are based on the distinguishability of the initial and final states of a quantum system. It is important to note that QSLs have also been formulated for several other quantities of interest, such as operator flow~\cite{Carabba2022}, generation of quantumness~\cite{Jing2016}, generation of correlations~\cite{Pandey2022,Pandey2023}, and quantum resource production and depletion~\cite{Mohan2022}. QSLs have a significant impact on understanding the limitations of various quantum many-body phenomena, such as information scrambling~\cite{vikram2022}, entanglement growth~\cite{Gong2022,Hamilton2023,Shri2022}, complexity growth~\cite{Niklas2022}, and dynamical quantum phase transition (DQPT)~\cite{Heyl2017,Zhou2021} etc. The concept of speed limits is not limited to quantum physics; it is also widely studied in classical systems~\cite{Okuyama2018,Shanahan2018}, thermodynamics~\cite{Tan20232,Yosh2021,Aghion2023}, and quantum gravity~\cite{Liegener2022,Shabir2023}.

The main goal of studying QSLs is to speed up the quantum process and estimate the minimal time required to implement a given quantum operation. However, a major drawback of the QSLs obtained to date is that they barely saturate, even for unitary dynamics of a closed quantum system described by pure states. Hence, they fail to estimate the exact amount of time required to implement a given quantum operation (or time taken by a given quantum process). The same shortcoming also occurs with both the MT and ML bounds. For example, the famous MT bound only saturates for optimal Hamiltonians that transport the quantum system along a geodesic. Such optimal Hamiltonians have a specific form $H_{\rm opt}=\hbar\omega(|\psi\rangle\langle\psi^{\perp}|+h.c.)$~\cite{Carlini2006,Campaioli2020}. It is worth mentioning that all optimal Hamiltonians are self-inverse, but the converse is not always true. The MT bound can be derived in two different ways: (i) using the Robertson-Heisenberg uncertainty relation~\cite{Mandelstam1945} and (ii) using the geometric approach given by Anandan and Aharonov~\cite{Anandan1990}. Since both approaches use inequalities, the MT bound is also an inequality. Recently, a stronger speed limit~\cite{Dimpi20222} has been derived using the stronger uncertainty relation introduced by Maccone and Pati~\cite{Maccone2014}, which is tighter than the MT bound. However, this bound is also an inequality, and finding where it saturates is difficult.

In this letter, we introduce the notion of exact quantum speed limits (QSLs) that hold for arbitrary unitary processes in two-dimensional as well as higher-dimensional quantum systems described by pure states. We refer to it as the ``exact" quantum speed limit because it precisely estimates the time required to implement any unitary process.
The key to the derivation of these exact QSLs is the exact uncertainty relation proved by Hall~\cite{Hall2001}. The exact quantum speed limit for two-dimensional system depends on the shortest path connecting the initial and final points and on the fluctuation of the non-classical part of the Hamiltonian. For higher-dimensional systems, the exact quantum speed limit depends on the Wootters length (obtained by integrating the distinguishability metric) and on the fluctuation of the non-classical part of the Hamiltonian.  The exact speed limits coincide with the MT bound for optimal Hamiltonians because they have no classical part. Additionally, we have derived an additional QSL in the form of an inequality that applies to any finite- or infinite-dimensional quantum system undergoing unitary dynamics described by pure states. Notably, this bound is tighter than the well-known MT bound and interestingly, for any $d$-dimensional quantum system it saturates for unitaries generated by self-inverse Hamiltonians. Moreover, we find a new way to characterise the optimal Hamiltonians (which saturate the MT bound), namely, that all optimal Hamiltonians have vanishing classical part as defined in Ref.~\cite{Hall2001}; this means they are purely non-classical in nature.

{\it Exact Uncertainty Relation--} 
The uncertainty relations are considered as a cornerstone of quantum mechanics. Most of the uncertainty relations derived to date are in the form of inequalities, even for pure states, except for the uncertainty relation given by Hall~\cite{Hall2001}, which is known as the exact uncertainty relation. The exact uncertainty relation for any two observables $A$ and $B$, when the state of a quantum system is described by a pure state $\rho = \ket{\psi}\bra{\psi}$, is given as follows:
\begin{align}\label{exactUR}
    \delta_B A \,\Delta B^{\text{nc}} = \frac{\hbar}{2}.
\end{align}
Here, 
\begin{align}\label{Sam1}
(\delta_B A)^{-2}=\sum_k \frac{\bra{a_k}\frac{i}{\hbar}[B,\rho]\ket{a_k}^2}{\braket{a_k}{\rho| a_k}},
\end{align}
generalizes the quantum Fisher information for observable $A$ (having eigenbasis \{$\ket{a_k}\}$) with regard to infinitesimal changes in the quantum state $\rho$ generated by the operator $B$ \cite{Hall2001}. This therefore provides a measure of dispersion, $\delta_B A$, for $A$. Further, $\Delta B^{\text{nc}}$ is the uncertainty of the so-called non-classical part of observable $B$ in a state $\rho$, which is defined as $B^{\text{nc}}=B-B^{\text{cl}}$, where the classical part is given by
\begin{align}\label{Sam2}
B^{\text{cl}}=\sum_k\ketbra{a_k}{a_k}\frac{\bra{a_k}\frac{1}{2}\{B,\rho\}\ket{a_k}}{\braket{a_k} {\rho|a_k}},
\end{align}
and $\{,\}$ denotes the anti-commutator. It is important to note that here $B^{\rm cl}$ is not defined in a traditional way (which would simply be the diagonal part of $B$ in some reference basis). The classical part of $B$ commutes with $A$ while its non-classical part does not. In this sense, Eq.~(\ref{exactUR}) is an exact uncertainty relation for pure states. For a comprehensive understanding of this exact uncertainty relation refer to Refs.~\cite{Hall2001,Hall20012,Hall20013}, which provide detailed insights into the mathematical formulation and practical implications of this uncertainty relation.

Now we will discuss the main results of this letter, beginning with the derivation of the exact speed limit for arbitrary unitary dynamics of two-dimensional quantum systems, as well as dynamics of $d$-dimensional quantum systems.
\begin{theorem}\label{Th1}
For a two-dimensional quantum system, the time required to transport a given state $|\psi_0\rangle$ to a target state $|\psi_T\rangle$ via unitary dynamics generated by the Hamiltonian $H_t$ is given by
\begin{equation}\label{exactQSL2d}
    T= \frac{\hbar\, \Theta_{\rm 0T}}{\langle\!\langle \Delta H^{\text{\rm nc}}_t \rangle\!\rangle_{T}}
\end{equation}
whenever the evolution makes $p_t\equiv |\!\braket{\psi_0}{\psi_t}\!|^2$ a monotonic function. Here $\Theta_{\rm 0T}$ is the Hilbert space angle between the initial and final states, $|\psi_0\rangle$ and $|\psi_T\rangle$, respectively, $H^{\text{\rm nc}}_t$ is computed using the basis 
$\{|\psi_0\rangle,|\psi_0^\perp\rangle\}$ and
$\langle\!\langle X_t \rangle\!\rangle_{T}\equiv \frac{1}{T}\!\int_{0}^{T}\! X_t \,dt$  is time average of the quantity $X_t$.

\end{theorem}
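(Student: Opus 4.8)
The plan is to instantiate Hall's exact uncertainty relation, Eq.~(\ref{exactUR}), with a judicious choice of the observable $A$, namely the projector onto the initial state, $A=\ket{\psi_0}\bra{\psi_0}$. Its eigenbasis is precisely the fixed basis $\{\ket{\psi_0},\ket{\psi_0^\perp}\}$ in which the theorem instructs us to compute $H_t^{\text{nc}}$, so the two ingredients are matched from the outset. With $B=H_t$ and $\rho=\rho_t=\ket{\psi_t}\bra{\psi_t}$, I would first use the Schr\"odinger equation to note that $\bra{a_k}\frac{i}{\hbar}[H_t,\rho_t]\ket{a_k}=-\dot p_k(t)$, where $p_k(t)=\bra{a_k}\rho_t\ket{a_k}$ is the probability of the $k$-th outcome of a measurement of $A$ on $\rho_t$. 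Hence the quantity $(\delta_{H_t}A)^{-2}$ defined in Eq.~(\ref{Sam1}) is exactly the classical Fisher information $\sum_k \dot p_k^2/p_k$ of the time-parametrized outcome distribution (the sign of $\dot p_k$ being immaterial because it is squared).

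For a two-level system this Fisher information has only two terms, with $p_1(t)=|\!\braket{\psi_0}{\psi_t}\!|^2=p_t$ and $p_2(t)=1-p_t$, so that $(\delta_{H_t}A)^{-2}=\dot p_t^{\,2}/[p_t(1-p_t)]$. Substituting into Hall's relation $\delta_{H_t}A\,\Delta H_t^{\text{nc}}=\hbar/2$ then yields the pointwise identity
\begin{equation}
\Delta H_t^{\text{nc}}=\frac{\hbar}{2}\,\frac{|\dot p_t|}{\sqrt{p_t(1-p_t)}}.
\end{equation}
The next step is purely kinematic: differentiating $\cos\Theta_{\rm 0t}=\sqrt{p_t}$, which follows from $\Theta_{\rm 0t}=\arccos\sqrt{p_t}$ together with $\Theta_{\rm 00}=0$, gives $|\dot\Theta_{\rm 0t}|=|\dot p_t|/[2\sqrt{p_t(1-p_t)}]$, so the two expressions collapse to the remarkably simple relation $\Delta H_t^{\text{nc}}=\hbar\,|\dot\Theta_{\rm 0t}|$.

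Finally, I would integrate this identity over $[0,T]$, and this is where the monotonicity hypothesis does the essential work. If $p_t$ (equivalently $\Theta_{\rm 0t}$) is monotonic, then $\dot\Theta_{\rm 0t}$ does not change sign and $\int_0^T|\dot\Theta_{\rm 0t}|\,dt=|\Theta_{\rm 0T}-\Theta_{\rm 00}|=\Theta_{\rm 0T}$, whence $\frac{1}{\hbar}\int_0^T\Delta H_t^{\text{nc}}\,dt=\Theta_{\rm 0T}$; rewriting the left-hand side through the time average $\langle\!\langle\Delta H_t^{\text{nc}}\rangle\!\rangle_T$ and solving for $T$ gives Eq.~(\ref{exactQSL2d}). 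The main subtlety, and the reason this is an exact equality rather than a bound, is precisely this sign-definiteness: without monotonicity one only has $\int_0^T|\dot\Theta_{\rm 0t}|\,dt\geq\Theta_{\rm 0T}$, which degrades the statement to the inequality $T\geq\hbar\,\Theta_{\rm 0T}/\langle\!\langle\Delta H_t^{\text{nc}}\rangle\!\rangle_T$. I would also check that $A=\ket{\psi_0}\bra{\psi_0}$ furnishes a legitimate (non-degenerate) observable at each instant so that Hall's split of $H_t$ into classical and non-classical parts applies throughout, and that the eigenbasis entering $H_t^{\text{nc}}$ remains the fixed basis $\{\ket{\psi_0},\ket{\psi_0^\perp}\}$ as the state evolves.
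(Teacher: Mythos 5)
Your proposal is correct and follows essentially the same route as the paper's own proof: instantiate Hall's exact uncertainty relation with $A=\ket{\psi_0}\bra{\psi_0}$, $B=H_t$, $\rho=\rho_t$, reduce $(\delta_{H_t}A)^{-2}$ to $\dot p_t^{\,2}/[p_t(1-p_t)]$ via the Liouville--von Neumann equation, and integrate the resulting identity using monotonicity of $p_t$ (your change of variables to $\Theta_{0t}$ is just the paper's $\arcsin\sqrt{p_t}$ antiderivative in disguise). Your handling of the sign via $|\dot p_t|$ is in fact slightly cleaner than the paper's assumption of a decreasing $p_t$, and your closing remark that non-monotonic evolution degrades the equality to $T\geq\hbar\,\Theta_{\rm 0T}/\langle\!\langle\Delta H_t^{\text{nc}}\rangle\!\rangle_T$ is precisely the mechanism behind the paper's Theorem~\ref{Th3}.
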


\begin{proof}
Let us consider a two-dimensional quantum system which is described by pure states, and whose time evolution is governed by the Hamiltonian $H_t$. Let us take $A=\Pi_{\psi_0}=\ketbra{\psi_0}{\psi_0}$, $B=H_{t}$ and $\rho_{t}=|\psi_{t}\rangle\langle\psi_{t}|$ (where $|\psi_{t}\rangle=U_{t}|\psi_{0}\rangle$) in the exact uncertainty relation Eq.~\eqref{exactUR}. Then, using the definition of $\delta_{H_t}\Pi_{\psi_0}$ defined in Eq.~\eqref{Sam1}, we have
    \begin{align}\label{exactQSL2d1man}
        (\delta_{H_t}\Pi_{\psi_0})^{-1}=-\frac{\frac{d}{dt}|\braket{\psi_0}{\psi_t}|^2}{\sqrt{|\braket{\psi_0}{\psi_t}|^2-|\braket{\psi_0}{\psi_t}|^4}},
    \end{align}
where we take the negative root by assuming $p_t$ is monotonically decreasing.
To arrive at Eq.~\eqref{exactQSL2d1man}, we use the fact that
$|\braket{\psi_0^{\perp}}{\psi_t}|^2=1-|\braket{\psi_0}{\psi_t}|^2=1-p_t$, where \{$\ket{\psi_0},\ket{\psi_0^{\perp}}$\} is the orthonormal basis in two-dimensional Hilbert space. Now the exact uncertainty relation \eqref{exactUR} becomes (for a detailed proof see the Supplementary Material)
\begin{align}\label{exactQSL2d2}
    \Delta H_t^{\text{nc}}
    = -\frac{\hbar}{2}\;\frac{\frac{d}{dt} p_t}{\!\!\sqrt{p_t(1-p_t)}}.
\end{align}
Integrating both sides of this equation then yields
\begin{align}\label{exactQSL2d3}
    T= \frac{\hbar\, \Theta_{\rm 0T}}{\langle\!\langle \Delta H^{\text{\rm nc}}_t \rangle\!\rangle_{T}}.
\end{align}

\end{proof}

The exact speed limit for a two-dimensional quantum system holds for both time-dependent and time-independent Hamiltonians, as long as the survival probability $p_t$ monotonically decreases during evolution. Theorem~\ref{Th1} also holds for $d$-dimensional quantum systems if the time-evolved state belongs to an effective two-dimensional Hilbert space. Since theorem~\ref{Th1} is applicable for a particular type of unitary evolution of $d$-dimensional quantum systems, we will now derive the exact speed limit for arbitrary unitary evolution of $d$-dimensional quantum systems.

\begin{theorem}\label{Th2}
For a d-dimensional quantum system, the time required to transport a given state $|\psi_0\rangle$ to a target state $|\psi_T\rangle$ via unitary dynamics generated by the Hamiltonian $H_t$ is given by
    \begin{align}\label{exactQSLdd}
        T=\frac{\hbar\; l(\phi_t)\vert^T_0}{\langle\!\langle \Delta H^{\text{\rm nc}}_t \rangle\!\rangle_{T}},
    \end{align}
when $H^{\text{\rm nc}}_t$ is computed using the basis 
$\{|a_i\rangle\}$ (where the initial state $|\psi_0\rangle$ is included in this basis) and writing $|\psi_t\rangle=\sum_{i=0}^{d-1} c_i|a_i\rangle$, we may define the length of the path in a real
Hilbert space as $l(\phi_t)\vert^T_0=\bigintsss_{0}^{T}\!\!\!\sqrt{\langle{\dot{\phi_t}}|\dot{\phi_t}\rangle}\,dt$ traced out by the real vector $\ket{\phi_t}=\sum_{i=0}^{d-1}\vert c_i\vert\ket{a_i}$, and $|\dot{\phi_t}\rangle\equiv \frac{d}{dt}\ket{\phi_t}$.
\end{theorem}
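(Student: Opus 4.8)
The plan is to mirror the strategy of Theorem~\ref{Th1}, but now exploiting the full $d$-dimensional basis $\{|a_i\rangle\}$ rather than the two-dimensional span of $\{|\psi_0\rangle,|\psi_0^\perp\rangle\}$. First I would choose $A$ to be any observable whose eigenbasis is $\{|a_i\rangle\}$ (with $|\psi_0\rangle$ among its eigenvectors), set $B=H_t$, and take $\rho_t=|\psi_t\rangle\langle\psi_t|$ with $|\psi_t\rangle=\sum_i c_i(t)|a_i\rangle$. Applying the exact uncertainty relation Eq.~\eqref{exactUR} then gives $\delta_{H_t}A\,\Delta H_t^{\text{nc}}=\hbar/2$, so the whole problem reduces to evaluating the generalized Fisher dispersion $\delta_{H_t}A$ of Eq.~\eqref{Sam1}.

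The key computation is to simplify $(\delta_{H_t}A)^{-2}$. Using the von Neumann equation $\dot\rho_t=-\tfrac{i}{\hbar}[H_t,\rho_t]$, the numerator appearing in Eq.~\eqref{Sam1} becomes $\langle a_k|\tfrac{i}{\hbar}[H_t,\rho_t]|a_k\rangle=-\langle a_k|\dot\rho_t|a_k\rangle=-\dot p_k$, where $p_k(t)\equiv|c_k(t)|^2=\langle a_k|\rho_t|a_k\rangle$ and the $|a_k\rangle$ are time-independent. Hence $(\delta_{H_t}A)^{-2}=\sum_k \dot p_k^2/p_k$, which is exactly the classical Fisher information of the distribution $\{p_k\}$ in the chosen basis.

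Next I would identify this Fisher information with the speed of the real representative vector. Since $|\phi_t\rangle=\sum_i \sqrt{p_i}\,|a_i\rangle$, differentiation gives $|\dot\phi_t\rangle=\sum_i \tfrac{\dot p_i}{2\sqrt{p_i}}|a_i\rangle$ and therefore $\langle\dot\phi_t|\dot\phi_t\rangle=\tfrac14\sum_i \dot p_i^2/p_i=\tfrac14(\delta_{H_t}A)^{-2}$. Combining this with the exact uncertainty relation yields the pointwise identity $\Delta H_t^{\text{nc}}=\hbar\sqrt{\langle\dot\phi_t|\dot\phi_t\rangle}$. Integrating both sides over $[0,T]$ and recognizing that $\int_0^T \Delta H_t^{\text{nc}}\,dt=T\,\langle\!\langle \Delta H_t^{\text{nc}}\rangle\!\rangle_T$ while $\int_0^T\sqrt{\langle\dot\phi_t|\dot\phi_t\rangle}\,dt=l(\phi_t)\vert^T_0$ delivers Eq.~\eqref{exactQSLdd}.

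I expect the main subtlety to lie in the identification step rather than in any single inequality, since the claim is an exact equality. One must check carefully that the generalized Fisher information of Eq.~\eqref{Sam1}, built from $[H_t,\rho_t]$ and the eigenbasis of $A$, collapses cleanly to the classical Fisher information $\sum_k \dot p_k^2/p_k$; this is precisely what makes the speed of $|\phi_t\rangle$ in the real Hilbert space—equivalently the Fisher--Rao/Wootters length of the probability trajectory—the correct geometric object. A pleasant consequence of working with the genuine non-negative speed $\sqrt{\langle\dot\phi_t|\dot\phi_t\rangle}$ is that, unlike the two-dimensional case, no monotonicity assumption on the survival probability is needed, since the path length is intrinsically well defined regardless of the sign of any individual $\dot p_k$.
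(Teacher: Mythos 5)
Your proposal is correct and follows essentially the same route as the paper's own proof: apply the exact uncertainty relation with $B=H_t$ in the fixed basis $\{\ket{a_i}\}$, use the von Neumann equation to collapse the generalized Fisher dispersion to the classical Fisher information $\sum_k \dot{p}_k^2/p_k$, identify this with $4\langle\dot{\phi_t}|\dot{\phi_t}\rangle$, and integrate. Your closing remark that no monotonicity assumption is needed (unlike Theorem~\ref{Th1}, since the sum of squares stays under a single square root) is a valid observation consistent with the paper's statement of Theorem~\ref{Th2}.
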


\begin{proof}
Let us consider a $d$-dimensional quantum system which is described by pure states, and whose time evolution is governed by the Hamiltonian $H_t$. Using similar reasoning to that used in theorem~\ref{Th1}, we may write
    \begin{align}\label{exQSLdd3}
         (\Delta H_t^{\text{nc}})^2 dt^2 = \frac{\hbar^2}{4}\sum_{i=0}^{d-1}\frac{\braket{a_i}{d\rho_t|a_i}^2}{\braket{a_i}{\rho_t|a_i}}.
    \end{align}
    Now, $\braket{a_i}{d\rho_t|a_i}=d\braket{a_i}{\rho_t|a_i}$ as $\ket{a_i}$ is time independent and hence
    \begin{align}\label{exQSLdd4}
        (\Delta H_t^{\text{nc}})^2 dt^2&=\frac{\hbar^2}{4}\sum_{i=0}^{d-1}\frac{(d|c_i|^2)^2}{|c_i|^2}
        =\hbar^2\sum_{i=0}^{d-1}\left(d|c_i|\right)^2,
    \end{align}
    where $c_i=\braket{a_i}{\psi_t}$. Integrating Eq.~\eqref{exQSLdd4} yields 
    \begin{align}\label{exQSLdd5}
        T=\frac{\hbar}{\langle\!\langle \Delta H^{\text{\rm nc}}_t \rangle\!\rangle_{T}}\bigintss_{0}^{T}\!\!\!\sqrt{\sum_{i=0}^{d-1}(d|c_i|)^2},
    \end{align}
where $\langle\!\langle \Delta H^{\text{\rm nc}}_t \rangle\!\rangle_{T}$ is the time average of $\Delta H^{\text{\rm nc}}_t$.

Defining the vector $\ket{\phi_t}=\sum_{i=0}^{d-1}|c_i|\ket{a_i}$ in a real Hilbert space, its time derivative is given as $|\dot{\phi_t}\rangle=\sum_{i=0}^{d-1}\frac{d\vert c_i\vert }{dt}\ket{a_i}$. Hence, we have $\langle\dot{\phi_t}|\dot{\phi_t}\rangle=\sum_{i=0}^{d-1}\bigl(\frac{d|c_i|}{dt}\bigr)^2$. Now the length of the curve traced out by vector $\ket{\phi_t}$ in real Hilbert space during the evolution is given by
    \begin{align}\label{exQSLdd6}    l(\phi_t)|^{T}_0&=\int_{0}^{T}\!\!\!\sqrt{\langle\dot{\phi_t}|\dot{\phi_t}\rangle}\,dt\nonumber\\
    &=\bigintss_{0}^{T}\!\!\!\sqrt{\sum_{i=0}^{d-1}\,\Bigl(\frac{d|c_i|}{dt}\Bigr)^2}\;dt .
    \end{align}
    Using Eqs.~\eqref{exQSLdd6} and \eqref{exQSLdd5}, we arrive at Eq.~\eqref{exactQSLdd}.
\end{proof}

It is worth mentioning that $l(\phi_t)\vert^T_0$ is exactly equal to the Wootters length $S_{{\text{PD}}}$ on the space of probability distributions (PD). $S_{{\text{PD}}}$ is defined as the integral of the Wootters metric on the space of probability distributions, i.e., $S_{{\text{PD}}}=\int_{\gamma}dS_{{\text{PD}}}$, where $(dS_{\text{PD}})^2 = \frac{1}{4}\sum_{i=0}^{d-1} (dp_i)^2/p_i =\frac{1}{4}F^{\text{cl}}\,dt^2$, $p_i = |c_i|^2$, and $\gamma$ is the curve traced out by a quantum system on the space of probability distributions when observed in some basis and $F^{\text{cl}}$  is just the classical Fisher information with respect to that basis 
$\{\ket{a_i}\}_{i=0}^{d-1}$~\cite{Harry2023,Braunstein1994}. For two-dimensional quantum systems, i.e., $d=2$, $l(\phi_t)\vert^T_0$ reduces to ${\Theta_{\rm 0T}}$ provided the path $\gamma$ does not retrace itself. Therefore theorem~\ref{Th1} may be considered a special case of theorem~\ref{Th2}.

Theorem~\ref{Th2} accurately determines the time required for arbitrary unitary dynamics of a $d$-dimensional system. However, calculating $l(\phi_t)\vert^T_0$ can be challenging compared to ${\Theta_{\rm 0T}}$, which is why we derive another speed limit for both finite- and infinite-dimensional quantum systems. However, it turns out to be an inequality.

\begin{theorem}\label{Th3}
For finite- or infinite-dimensional quantum systems, the time required to transport a given state $|\psi_0\rangle$ to a target state $|\psi_{T}\rangle$ via unitary dynamics generated by the Hamiltonian $H_t$ is lower bounded by the following inequality
\begin{equation}\label{inexactQSL}
   T \geq T_{\text{\rm QSL}}^{\text{\rm IMT}}\equiv \frac{\hbar\, \Theta_{\rm 0T}}{\langle\!\langle \Delta H^{\text{nc}}_t \rangle\!\rangle_{T}}\ge \frac{\hbar\, \Theta_{\rm 0T}}{\langle\!\langle \Delta H_t \rangle\!\rangle_{T}}\equiv T^{\text{\rm MT}}
\end{equation}
where $H_t^{\text{nc}}$ is determined using an orthogonal basis that
includes the initial state and $T^{\text{\rm MT}}$ is the time-dependent generalized form of the Mandelstam-Tamm (MT) bound. (We therefore call the left-hand inequality of Eq.~(\ref{inexactQSL}) the improved MT bound for the quantum speed limit, $T_{\text{\rm QSL}}^{\text{\rm IMT}}$).
\end{theorem}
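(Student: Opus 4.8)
The plan is to prove the two inequalities in Eq.~\eqref{inexactQSL} separately: the left inequality $T \ge T_{\text{QSL}}^{\text{IMT}}$ (the improved MT bound) and the right inequality $T_{\text{QSL}}^{\text{IMT}} \ge T^{\text{MT}}$ (the assertion that the improved bound dominates the generalized MT bound). Since $\hbar\,\Theta_{\rm 0T}$ is a common positive numerator, the right inequality is equivalent to the pointwise-in-time statement $\Delta H_t^{\text{nc}} \le \Delta H_t$, whereas the left inequality will follow by bounding the \emph{exact} expression of Theorem~\ref{Th2} from below. Crucially, no monotonicity of $p_t$ is assumed, which is exactly why the result is an inequality rather than an identity.

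For the left inequality I would start from the exact identity of Theorem~\ref{Th2}, $T = \hbar\, l(\phi_t)|_0^T / \langle\!\langle \Delta H_t^{\text{nc}}\rangle\!\rangle_T$, and replace the path length $l(\phi_t)|_0^T$ by a lower bound. The key observation is that under the substitution $|c_i| = \sqrt{p_i}$ the real vector $\ket{\phi_t}=\sum_i |c_i|\ket{a_i}$ lives on the unit sphere, so $l(\phi_t)|_0^T$ is precisely the Fisher--Rao (Wootters) arc length of the curve of probability distributions $\{p_i(t)\}$. The geodesic joining the endpoints of this curve has length $\arccos\!\big(\sum_i \sqrt{p_i(0)\,p_i(T)}\big)$. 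Because the basis is chosen to contain the initial state, $\ket{a_0}=\ket{\psi_0}$, the initial distribution is the corner $p_i(0)=\delta_{i0}$, so $\sum_i\sqrt{p_i(0)p_i(T)} = \sqrt{p_0(T)} = |\!\braket{\psi_0}{\psi_T}\!|$ and the geodesic length collapses to exactly $\Theta_{\rm 0T}$. Since any curve is at least as long as the geodesic connecting its endpoints, $l(\phi_t)|_0^T \ge \Theta_{\rm 0T}$, and substituting this into the Theorem~\ref{Th2} identity yields $T \ge \hbar\,\Theta_{\rm 0T}/\langle\!\langle \Delta H_t^{\text{nc}}\rangle\!\rangle_T$.

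For the right inequality I would establish a Pythagorean decomposition of the variance valid for pure states. Writing $H_t = H_t^{\text{cl}} + H_t^{\text{nc}}$ with the classical part defined in Eq.~\eqref{Sam2}, I would first verify $\langle H_t^{\text{cl}}\rangle = \langle H_t\rangle$ (so $\langle H_t^{\text{nc}}\rangle = 0$) directly from the definition and the trace of the anticommutator. The crucial step is to show that the cross term in $(\Delta H_t)^2$ vanishes, i.e. $\langle \{H_t^{\text{cl}}, H_t^{\text{nc}}\}\rangle = 0$. Expanding the anticommutator and using $\rho_t=\ketbra{\psi_t}{\psi_t}$, one finds $\langle\{H_t^{\text{cl}},H_t\}\rangle = 2\langle (H_t^{\text{cl}})^2\rangle$, because for a pure state $\mathrm{Re}\,\bra{\psi_t}H_t^{\text{cl}}H_t\ket{\psi_t}$ reproduces $\langle (H_t^{\text{cl}})^2\rangle$ term by term in the $\{\ket{a_i}\}$ basis (each diagonal coefficient $h_i$ of $H_t^{\text{cl}}$ being, by construction, $\mathrm{Re}[\bra{a_i}H_t\ket{\psi_t}\,\overline{\braket{a_i}{\psi_t}}]$ divided by $|\braket{a_i}{\psi_t}|^2$). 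This gives $(\Delta H_t)^2 = (\Delta H_t^{\text{cl}})^2 + (\Delta H_t^{\text{nc}})^2 \ge (\Delta H_t^{\text{nc}})^2$, hence $\Delta H_t^{\text{nc}} \le \Delta H_t$ at every instant. Averaging over $[0,T]$ preserves the inequality, $\langle\!\langle \Delta H_t^{\text{nc}}\rangle\!\rangle_T \le \langle\!\langle \Delta H_t\rangle\!\rangle_T$, and taking reciprocals of these positive quantities delivers $T_{\text{QSL}}^{\text{IMT}} \ge T^{\text{MT}}$.

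The hard part will be the vanishing of the cross term $\langle\{H_t^{\text{cl}},H_t^{\text{nc}}\}\rangle$: this is where purity of the state is indispensable, and it is what turns a generic variance inequality into a clean orthogonal split. The geometric step in the first inequality is delicate only at the simplex boundary, since the initial distribution sits at a corner $p_i(0)=\delta_{i0}$; but this is precisely the feature that makes the Fisher--Rao geodesic length equal $\Theta_{\rm 0T}$ rather than a basis-dependent quantity, and it is harmless because ``geodesic minimizes length'' holds regardless. I expect equality throughout exactly when the curve $\{p_i(t)\}$ traces the geodesic without retracing (recovering Theorems~\ref{Th1} and~\ref{Th2}) and when $\Delta H_t^{\text{cl}} = 0$, i.e. for purely non-classical (optimal) Hamiltonians.
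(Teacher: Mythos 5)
Your proof is correct, but it reaches Theorem~\ref{Th3} by a genuinely different route than the paper. For the left inequality, you lower-bound the exact expression of Theorem~\ref{Th2} geometrically: since the basis contains $\ket{\psi_0}$, the initial probability distribution sits at a simplex corner, the Fisher--Rao/spherical geodesic distance between $\ket{\phi_0}$ and $\ket{\phi_T}$ collapses to $\arccos(|\!\braket{\psi_0}{\psi_T}\!|)=\Theta_{\rm 0T}$, and ``curve length $\geq$ geodesic length'' gives $l(\phi_t)\vert_0^T\ge\Theta_{\rm 0T}$ --- all correct. The paper instead proves this inequality from scratch, without invoking Theorem~\ref{Th2} at all: it takes the expectation of the Liouville--von Neumann equation in $\rho_0$, notes that $[H_t,\rho_0]=[H_t^{\text{nc}},\rho_0]$ because $H_t^{\text{cl}}$ is diagonal in a basis containing $\ket{\psi_0}$, applies the Robertson--Heisenberg relation to the pair $(\rho_0,H_t^{\text{nc}})$ in the state $\rho_t$ using $(\Delta\rho_0)^2=p_t-p_t^2$, and integrates with the triangle inequality. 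For the right inequality, the paper merely \emph{cites} the decomposition $(\Delta H_t)^2=(\Delta H_t^{\text{cl}})^2+(\Delta H_t^{\text{nc}})^2$ (it is Hall's result), whereas you actually prove it by checking $\langle H_t^{\text{cl}}\rangle=\langle H_t\rangle$ and the vanishing of the cross term $\langle\{H_t^{\text{cl}},H_t^{\text{nc}}\}\rangle$; your term-by-term computation of $\mathrm{Re}\bra{\psi_t}H_t^{\text{cl}}H_t\ket{\psi_t}=\langle(H_t^{\text{cl}})^2\rangle$ is exactly right, so your write-up is in this respect more self-contained than the paper's. What each approach buys: yours makes the equality conditions transparent (saturation iff the probability curve is a non-retracing geodesic and $\Delta H_t^{\text{cl}}=0$) and exhibits Theorem~\ref{Th3} as a corollary of Theorem~\ref{Th2}, unifying the three theorems; the paper's argument is independent of both Theorem~\ref{Th2} and Hall's exact uncertainty relation, and for that reason applies verbatim to infinite-dimensional systems. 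That last point is the one caveat you should address: Theorem~\ref{Th2} is stated in the paper only for $d$-dimensional systems, so to cover the infinite-dimensional case claimed in Theorem~\ref{Th3} you must note that the exact-uncertainty derivation of Theorem~\ref{Th2} (and your spherical-geodesic bound) carries over to a countable basis, with the sums $\sum_i(d|c_i|/dt)^2$ convergent; alternatively, adopt the paper's direct Robertson--Heisenberg argument for that case. Both your route and the paper's also share the benign subtlety that denominators $|\braket{a_i}{\psi_t}|^2$ vanish at $t=0$ for $i\neq 0$, which is handled by the limit $(\dot p_i)^2/p_i$ remaining finite.
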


\begin{proof}
Let us consider finite- or infinite-dimensional quantum systems, which are described using pure states. The time evolution of a given quantum system can be described by the Liouville-von Neumann equation, given as
\begin{equation}
\frac{d\rho_t}{dt}=-\frac{i}{\hbar}[H_t,\rho_t],
\end{equation}
where $H_t$ is the driving Hamiltonian of the quantum system, $\rho_{t} =\ketbra{\psi_t}{\psi_t}$, and $|\psi_{t}\rangle=U_{t}|\psi_{0}\rangle$ is the time-evolved initial state $|\psi_{0}\rangle$.

Taking the expectation of this equation in the initial state then yields
 \begin{align}\label{inexactQSL1}
    \frac{d}{dt}\big|\!\braket{\psi_0}{\psi_t}\!\big|^2= -\frac{i}{\hbar}\Tr(\rho_0[H_t,\rho_t])= \frac{i}{\hbar}\Tr(\rho_t[H_t,\rho_0]),
 \end{align}
As $H_t$ can be expressed as $H_t=H_t^{\text{nc}}+H_t^{\text{cl}}$, and our orthonormal eigenbasis, \{$\ket{\psi_k}\}_{k=0}^{d-1}$, defining these terms involves the initial state, we see that $H^{\text{cl}}_t$ commutes with the initial state, $\rho_0$. Hence, with this basis choice, we have $[H_t,\rho_0]=[H_t^{\text{nc}},\rho_0]$. Next, using Eq.~\eqref{inexactQSL1} and the Robertson-Heisenberg uncertainty relation \cite{Robertson1929}, $\frac{1}{2}|\langle[A,B]\rangle|\le \Delta A\Delta B$, we obtain 
\begin{align}\label{inexactQSL2}
    \Big|\frac{d}{dt}|\braket{\psi_0}{\psi_t}|^2\Big|= \big|\frac{i}{\hbar}{\Tr}(\rho_t[H^{\text{nc}}_t,\rho_0])\big| \leq \frac{2}{\hbar}\,\Delta \rho_0 \,\Delta H_t^{\text{nc}}.
 \end{align}
where $(\Delta \rho_0)^2 \equiv {\Tr}(\rho_t\, \rho_0^2)-[{\Tr}(\rho_t \,\rho_0)]^2=p_t-p_t^2$ and $p_t={\Tr}(\rho_t\, \rho_0)=|\!\braket{\psi_0}{\psi_t}\!|^2$. Consequently, Eq.~\eqref{inexactQSL2} becomes
\begin{align}\label{inexactQSL3}
         \Delta H_t^{\text{nc}}dt\geq \frac{\hbar}{2}\,\frac{|dp_t|}{\!\!\sqrt{p_t-p_t^2}}.
\end{align}
Integrating Eq.~\eqref{inexactQSL3} and rearranging it now yields
     \begin{align}   
        T &\geq \frac{\hbar}{2\langle\!\langle \Delta H^{\text{\rm nc}}_t \rangle\!\rangle_{T}}\int_{0}^{T}\!\!\frac{|dp_t|}{\!\!\sqrt{p_t-p_t^2}}\nonumber\\
        &\geq \frac{\hbar}{2\langle\!\langle \Delta H^{\text{\rm nc}}_t \rangle\!\rangle_{T}}\left|\int_{0}^{T}\!\!\frac{dp_t}{\!\!\sqrt{p_t-p_t^2}}\right|\nonumber\\
        &=\frac{\hbar\, \Theta_{\rm 0T}}{\langle\!\langle \Delta H^{\text{\rm nc}}_t \rangle\!\rangle_{T}}\equiv T_{\text{QSL}}^{\text{IMT}},\nonumber
     \end{align}
where we have used the triangle inequality for the integral to obtain the last inequality.

To finish off the proof, we need only show that $T_{\text{QSL}}^{\text{IMT}}\geq T^{\text{MT}}$, with the time-dependent version of $T^{\text{MT}}$ given in the statement of the theorem. Recall that
\begin{align}
(\Delta H_t)^2=(\Delta H^{\text{\rm nc}}_t)^2
+(\Delta H^{\text{\rm cl}}_t)^2\ge (\Delta H^{\text{\rm nc}}_t)^2,
\end{align}
hence $\langle\!\langle \Delta H^{\text{\rm nc}}_t \rangle\!\rangle_{T} \leq \langle\!\langle \Delta H_t \rangle\!\rangle_{T}$ and the result follows.
\end{proof}

This theorem yields an improvement over the usual MT bound by removing the unnecessary `classical' contribution to the uncertainty in the energy. It can be shown that Hamiltonians, $H_{\text{opt}}$, that saturate the MT bound will also saturate the improved MT bound. In this case, we have $T_{\text{QSL}}^{\text{IMT}}= T^{\text{MT}}$ because such Hamiltonians have a vanishing classical part (i.e., $H_{\text{opt}}^{\text{cl}}=0$). Further, we may identify the quantity $\Delta H_t^{\text{nc}}/\hbar$ as the actual speed of evolution of a pure quantum system in the Hilbert space, improving on Ref.~\cite{Anandan1990}'s claim that the speed of evolution is $\Delta H_t/\hbar$ (with distances, in each case, measured via the Fubini-Study metric~\cite{Anandan1990}).

We note that while all optimal Hamiltonians for the standard MT bound, $H_{\text{opt}}$, are self-inverse, the converse is not true. In general, $\Delta H_t^{nc}$ turns out to be time-dependent even for time-independent Hamiltonians. However, $\Delta H_t^{nc}$ is time-independent when the Hamiltonian is optimal. We now show that all self-inverse Hamiltonians also saturate the improved MT bound.

\begin{corollary}\label{cor}
The bound derived in Theorem~\ref{Th3} saturates for the  unitary dynamics generated by time-independent self-inverse Hamiltonians, i.e., for which $H^2=I$ (where we set $\hbar=1$ and make the Hamiltonian dimensionless).
\end{corollary}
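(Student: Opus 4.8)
The plan is to exploit the self-inverse condition to collapse the dynamics onto an effective two-dimensional, $H$-invariant subspace, on which Theorem~\ref{Th1} already supplies an exact equality. First I would use $H^2=I$ (with $\hbar=1$) to write the propagator in closed form, $U_t=e^{-iHt}=\cos t\,I-i\sin t\,H$, so that $\ket{\psi_t}=\cos t\,\ket{\psi_0}-i\sin t\,H\ket{\psi_0}$. The entire orbit therefore lies in $\mathcal{S}=\mathrm{span}\{\ket{\psi_0},H\ket{\psi_0}\}$, which is at most two-dimensional, and since $H^2\ket{\psi_0}=\ket{\psi_0}\in\mathcal S$ the subspace is invariant under $H$. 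Writing $\ket{\psi_0^\perp}\propto H\ket{\psi_0}-\langle H\rangle\ket{\psi_0}$ and $\Delta H=\sqrt{1-\langle H\rangle^2}$ (assuming $\Delta H\neq0$, else the evolution is trivial), the restriction of $H$ to $\mathcal S$ in the basis $\{\ket{\psi_0},\ket{\psi_0^\perp}\}$ is the $2\times2$ self-inverse block with diagonal entries $\langle H\rangle,-\langle H\rangle$ and off-diagonal entries $\Delta H$.

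Next, because the time-evolved state never leaves this effective two-dimensional space, I would invoke the remark just after Theorem~\ref{Th1} that Eq.~\eqref{exactQSL2d} applies verbatim in that situation. Theorem~\ref{Th1} is an exact equality, $T=\hbar\,\Theta_{\rm 0T}/\langle\!\langle \Delta H^{\text{nc}}_t\rangle\!\rangle_T$, and its right-hand side is precisely $T_{\text{QSL}}^{\text{IMT}}$ of Theorem~\ref{Th3}; hence the inequality of Theorem~\ref{Th3} is saturated. To confirm the hypothesis of Theorem~\ref{Th1}, I would compute $p_t=|\braket{\psi_0}{\psi_t}|^2=\cos^2 t+\langle H\rangle^2\sin^2 t=1-(1-\langle H\rangle^2)\sin^2 t$, whose derivative $\dot p_t=-(1-\langle H\rangle^2)\sin 2t$ has a fixed sign on $t\in[0,\pi/2]$, so $p_t$ is monotonic and the orbit does not retrace on that interval. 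Note that this argument never requires $\langle H\rangle=0$, which is the point of the corollary: self-inverse Hamiltonians with nonvanishing classical part still saturate the improved bound once $H^{\text{cl}}_t$ is subtracted.

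The step that needs the most care is checking that the $\Delta H^{\text{nc}}_t$ entering Theorem~\ref{Th3}, which is built from the \emph{full} Hamiltonian on the whole Hilbert space via Eqs.~\eqref{Sam1}--\eqref{Sam2}, agrees with the one produced by the two-dimensional restriction. Here the $H$-invariance of $\mathcal S$ does the work: since $\rho_t$ is supported on $\mathcal S$ and $H\mathcal S\subseteq\mathcal S$, every trace and matrix element feeding $H^{\text{cl}}_t$, $H^{\text{nc}}_t$, and $(\Delta H^{\text{nc}}_t)^2=\bra{\psi_t}(H^{\text{nc}}_t)^2\ket{\psi_t}-\bra{\psi_t}H^{\text{nc}}_t\ket{\psi_t}^2$ involves only the $2\times2$ block, so any coupling of $\mathcal S$ to its complement is irrelevant. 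As an independent check, and an alternative route avoiding Theorem~\ref{Th1}, I would show that the two inequalities in the proof of Theorem~\ref{Th3} both become equalities: the triangle inequality for the $p_t$-integral is tight by the monotonicity established above, and in the two-dimensional block the Robertson--Heisenberg step of Eq.~\eqref{inexactQSL3} reduces to Hall's exact relation Eq.~\eqref{exactQSL2d2} (the dropped covariance term vanishing), hence is tight as well. Either route delivers $T=T_{\text{QSL}}^{\text{IMT}}$.
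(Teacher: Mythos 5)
Your proposal is correct, and it takes a genuinely different route from the paper's. The paper proves the corollary by direct computation in the full Hilbert space: starting from Hall's identities $\langle H^{\mathrm{cl}}\rangle=\langle H\rangle$ and $(\Delta H^{\mathrm{nc}})^{2}=\langle H^{2}\rangle-\langle (H^{\mathrm{cl}})^{2}\rangle$, it substitutes $U_t=\cos t\,I-i\sin t\,H$ into the definition of $H^{\mathrm{cl}}$, obtains the closed form $\Delta H^{\mathrm{nc}}=\sqrt{1-\langle H\rangle^{2}}\,\cos t\big/\sqrt{1-(1-\langle H\rangle^{2})\sin^{2}t}$ (its Eq.~(S14)), and integrates to show $\langle\!\langle\Delta H^{\mathrm{nc}}\rangle\!\rangle_{T}=\Theta_{0T}/T$, hence $T^{\mathrm{IMT}}_{\mathrm{QSL}}=T$. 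You instead argue structurally: $H^{2}=I$ confines the orbit to the $H$-invariant plane $\mathrm{span}\{\ket{\psi_0},H\ket{\psi_0}\}$, so the exact equality of Theorem~\ref{Th1} (in its effective two-dimensional extension noted right after that theorem) applies, and its right-hand side is precisely $T^{\mathrm{IMT}}_{\mathrm{QSL}}$; saturation then follows with no integration at all. Your third paragraph correctly identifies and discharges the one point this route must check---that the $\Delta H^{\mathrm{nc}}_t$ of Theorem~\ref{Th3}, defined through a full $d$-dimensional basis, agrees with the two-dimensional block computation---and the $H$-invariance argument is sound: since $\rho_t$ is supported on the plane, basis vectors orthogonal to it contribute nothing (indeed, in the paper's Eq.~(S12) every $k\neq0$ term vanishes identically, which both confirms your reduction and shows the saturation is independent of how the basis is completed outside the plane, something your adapted-basis choice fixes by fiat rather than proves). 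What each approach buys: yours explains conceptually \emph{why} self-inverseness suffices---it forces effective qubit dynamics, in which Hall's relation makes the Robertson step of Theorem~\ref{Th3} an equality---while the paper's computation delivers the explicit time profile of $\Delta H^{\mathrm{nc}}$ and the explicit value of its time average. One shared caveat, explicit in your write-up but only implicit in the paper's: both arguments hold on the interval where $p_t$ is monotone ($T\le\pi/2$ in your parametrization), since for larger $T$ the sign of $\cos t$ flips in (S14) and the state begins to retrace its path.
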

See the Supplementary Material for the detailed proof of the above corollary.
While proving this corollary we made no assumptions about the dimension of the system or the initial state. Therefore, it equally holds for $d$-dimensional systems. It is important to mention the significance of self-inverse Hamiltonians because they are highly useful in both single- and multi-qubit systems due to their ability to describe a wide range of physical interactions between qubits.  Additionally, self-inverse Hamiltonians can be used to implement a variety of single-qubit and two-qubit quantum gates, including rotations, phase shifts, and the CNOT gate. Therefore the bound~\eqref{inexactQSL} is highly useful for quantum computing. Moreover, the bound~\eqref{inexactQSL} can be saturated for a more general class of Hamiltonians, but finding such Hamiltonians is an extremely difficult task.\par

It is worth mentioning that the unitary dynamics for which the improved MT bound saturates, Eq.~\eqref{inexactQSL}, may not be the fastest dynamics. To identify the fastest unitary dynamics, we need to estimate the evolution time and compare it with the time taken by other Hamiltonians to reach the target state from a given initial state. In short, we need to minimize the exact evolution time determined by exact QSLs over all possible Hamiltonians that connect these states, i.e., $T_{\text{opt}}=\min_{H}T$.
Estimating the speed limits presented in this letter requires information about the time-evolved state and generator of the dynamics (i.e., the driving Hamiltonian), which is slightly more involved than calculating the MT bound or ML bound for a time-independent Hamiltonian, as the latter only requires knowledge of the generator of the dynamics. However, for time-dependent Hamiltonians, the MT bound requires knowledge of the time-evolved state which would still not allow us to improve the MT bound itself. Also, all other QSLs for any dynamics introduced to date require the same amount of information as our exact QSLs, but they fail to accurately estimate the exact evolution time. In this sense, our exact quantum speed limits outperform previously obtained results.

{\it  Relation between Exact Speed limit, Parameter estimation and Circuit Complexity:}
If $\theta$ is an unknown parameter encoded by the unitary operator $U_{\theta}=e^{-\frac{iH\theta}{\hbar}}$ in the quantum state, then, without loss of generality, the bound obtained in theorem 3 can be generalized as follows:
\begin{equation}
   \theta \geq  \frac{\hbar\, \Theta_{\rm 0 \rm \theta}}{\langle\!\langle \Delta H^{\text{nc}} \rangle\!\rangle_{\rm \theta}}\ge
   \frac{\hbar\, \Theta_{\rm 0 \rm \theta}}{\langle\!\langle \Delta H \rangle\!\rangle_{\rm \theta}} =
   \frac{\hbar\, \Theta_{\rm 0 \rm \theta}}{\Delta H}\;,
   \label{unknownP}
\end{equation}
where the final step relies on the fact that
$[H,U_\theta]=0$.

In quantum computation, one of the challenges is to find an efficient circuit for implementing a unitary operation $U_{\theta}=e^{-\frac{iH\theta}{\hbar}}$ (where $\theta$ is a known parameter), which can be used to solve computational problems like Grover's search algorithm or Shor's factoring algorithm. The minimum number of gates needed to implement a specific unitary transformation, taking the system from a reference state $|\psi_{0}\rangle$ to a target state $|\psi_{\rm \theta}\rangle$, is known as its circuit complexity~\cite{Jeff2017,Chapman2018,Mic2006}. 
 This is now widely used in many other areas including high energy physics~\cite{Chapman2018,Caputa2019,Adhikari2021} and condensed matter physics~\cite{Liu2020}. It has been found that the geodesic distance $d(I,U_{\theta})$ (the shortest distance between the Identity operator  and the Unitary operator of interest) serves as a good measure of circuit complexity. Therefore, the circuit complexity for a given reference state may be defined as~\cite{Adhikari2021,Chapman2018,Brown2019,Poggi2019}:
\begin{equation}
\mathcal{C} \equiv d(I,U_{\theta}) = \arccos{|\langle\psi|U_{\theta}|\psi\rangle|}
=\Theta_{0\theta}\;.
\end{equation}
Interestingly, the circuit complexity reduces to the Hilbert space angle. Thus, the following relation can be derived using the above bound of Eq.~(\ref{unknownP}):
\begin{equation}
    \cal{C} \leq \frac{\theta\,{\langle\!\langle \Delta H^{\text{\rm nc}}\rangle\!\rangle_{\theta}} }{\hbar}   \leq  \frac{\theta\, \Delta H}{\hbar} .
\end{equation}
The above inequality provides an upper bound on computational complexity. For an optimal Hamiltonian, $H_{\rm opt}$, this bound saturates, implying that the circuit complexity may be precisely estimated as $\mathcal{C} = \omega\,\theta$, where $\omega$ is the coupling strength of $H_{\rm opt}$.

{\it Conclusions--} In this letter, we have derived the exact speed limit for two-dimensional closed quantum systems as well as higher-dimensional quantum systems described by pure states. The exact quantum speed limits are the consequence of a novel exact uncertainty relation for pure states introduced by Hall~\cite{Hall2001,Hall20012,Hall20013}. Additionally, we have derived a quantum speed limit for both finite- and infinite-dimensional quantum systems and show that if the generator of dynamics is self-inverse the bound saturates. In support of our findings, we also provided an example of a qubit system (see Supplementary Material). Moreover, we have identified the key reason why the MT bound does not saturate for most unitary processes: the generators (the driving Hamiltonians) contain some classical contribution. To obtain exact QSLs for most unitary processes, it is necessary to remove this `classicality' from the generators of the dynamics because it commutes with the state and does not contribute to its evolution. The QSLs derived here can be seen as an improved version of the MT bound because they depend on the variance of the non-classical part of the generator rather than the variance of the generator itself. The exact QSLs are solely dependent on the variance of the non-classical part of generator since it is the only component responsible for the system's evolution. The non-classical part of the generator strongly depends on the initial and time-evolved state of the given quantum system. Our exact quantum speed limits can accurately estimate the time required to perform any given task using quantum devices. Moreover, the exact speed limits can be thought of as an exact time-energy uncertainty relation.

Since the exact QSLs have been derived using fundamentally well-established concepts (non-commutativity of quantum mechanical observables and uncertainty relations), therefore, we believe that they can be easily verified in future experiments. Our results will significantly enhance our understanding of the time-energy uncertainty relation in quantum physics. Moreover, we believe that these results will have numerous applications in the rapidly developing area of quantum technologies including quantum computation, quantum control, quantum communication, quantum thermal machines and quantum energy storage devices. The ideas presented in this letter open the door to finding exact speed limits for mixed states as well as for open dynamical systems. Moreover, this approach can be extended beyond the standard speed limits because it can be helpful to derive the exact or tight speed limits for other quantities such as observables, operator flow, entanglement growth, information scrambling, and the growth of operator complexity.

\vskip 0.1in
\begin{acknowledgements}
B. Mohan and Sahil express their gratitude to Samyadeb Bhattacharya for facilitating their visit to CQST, IIIT Hyderabad. B. Mohan and Sahil acknowledge Siddhartha Das and Uttam Singh for the fruitful discussion. A.K. Pati acknowledges the support of the J. C. Bose Fellowship from the Department of Science and Technology (DST) India under Grant No.~JCB/2018/000038 for the period 2019-2024.
\end{acknowledgements}

\bibliography{main}

\newpage
\appendix
\section{Supplementary Material}

{\it \bf Detailed Proof of Theorem 1:}
Let us consider a two-dimensional quantum system which is described by pure states, and whose time evolution is governed by the Hamiltonian $H_t$. Let us take $A=\Pi_{\psi_0}=\ketbra{\psi_0}{\psi_0}$, $B=H_{t}$ and $\rho_{t}=|\psi_{t}\rangle\langle\psi_{t}|$ (where $|\psi_{t}\rangle=U_{t}|\psi_{0}\rangle$) in the exact uncertainty relation Eq. (2) (main text). Then, using the definition of $\delta_{H_t}\Pi_{\psi_0}$ defined in Eq. (3) (main text), we obtain
    \begin{align}\label{exactQSL2d1}
        (\delta_{H_t}\Pi_{\psi_0})^{-2}& =\frac{\bra{\psi_0}\frac{i}{\hbar}[H_t,\ketbra{\psi_t}{\psi_t}]\ket{\psi_0}^2}{\braket{\psi_0}{\ketbra{\psi_t}{\psi_t}| \psi_0}}\nonumber\\
        &+\frac{\bra{\psi_0^{\perp}}\frac{i}{\hbar}[H_t,\ketbra{\psi_t}{\psi_t}]\ket{\psi_0^{\perp}}^2}{\braket{\psi_0^{\perp}}{\ketbra{\psi_t}{\psi_t}| \psi_0^{\perp}}},\tag{S1}
    \end{align}
where $\{\ket{\psi_0},\ket{\psi_0^{\perp}}\}$ is the eigenbasis of $A$ .

Now, by using the Liouville-von Neumann equation   $\frac{d\rho_t}{dt}=-\frac{i}{\hbar}[H_t,\rho_t]$ (where $\rho_{t}=|\psi_{t}\rangle\langle\psi_{t}|$), we obtain
\begin{align}\label{exactQSL2d1}
        (\delta_{H_t}\Pi_{\psi_0})^{-2}&=\frac{\bra{\psi_0}\frac{d\rho_t}{dt}\ket{\psi_0}^2}{|\hspace{-1mm}\braket{\psi_0}{\psi_t}\hspace{-1mm}|^2}+\frac{\bra{\psi_0^{\perp}}\frac{d\rho_t}{dt}\ket{\psi_0^{\perp}}^2}{|\hspace{-1mm}\braket{\psi_0^{\perp}}{\psi_t}\hspace{-1mm}|^2}\nonumber\\
        &=\frac{\left(\frac{d}{dt}|\hspace{-1mm}\braket{\psi_0}{\psi_t}\hspace{-1mm}|^2\right)^2}{|\hspace{-1mm}\braket{\psi_0}{\psi_t}\hspace{-1mm}|^2}+\frac{\left(\frac{d}{dt}|\hspace{-1mm}\braket{\psi_0^{\perp}}{\psi_t}\hspace{-1mm}|^2\right)^2}{|\hspace{-1mm}\braket{\psi_0^{\perp}}{\psi_t}\hspace{-1mm}|^2}\nonumber\\
        &=\frac{\left(\frac{d}{dt}|\hspace{-1mm}\braket{\psi_0}{\psi_t}\hspace{-1mm}|^2\right)^2}{|\hspace{-1mm}\braket{\psi_0}{\psi_t}\hspace{-1mm}|^2}+\frac{\left(\frac{d}{dt}|\hspace{-1mm}\braket{\psi_0}{\psi_t}\hspace{-1mm}|^2\right)^2}{1-|\hspace{-1mm}\braket{\psi_0}{\psi_t}\hspace{-1mm}|^2},\nonumber\\
        &=\left(\frac{d}{dt}|\hspace{-1mm}\braket{\psi_0}{\psi_t}\hspace{-1mm}|^2\right)^2\left[\frac{1}{|\hspace{-1mm}\braket{\psi_0}{\psi_t}\hspace{-1mm}|^2-|\hspace{-1mm}\braket{\psi_0}{\psi_t}\hspace{-1mm}|^4}\right],\tag{S2}
    \end{align}
where we have used the fact that $\bra{\psi_0}\frac{d\rho_t}{dt}\ket{\psi_0}=\frac{d}{dt}|\hspace{-1mm}\braket{\psi_0}{\psi_t}\hspace{-1mm}|^2$ and $|\hspace{-1mm}\braket{\psi_0^{\perp}}{\psi_t}\hspace{-1mm}|^2=1-|\hspace{-1mm}\braket{\psi_0}{\psi_t}\hspace{-1mm}|^2$.\\
Next, utilizing the exact uncertainty relation (Eq. (2) of the main text), we obtain
\begin{align}
    \Delta H_t^{\text{nc}} &= \frac{\hbar}{2}(\delta_{H_t}\Pi_{\psi_0})^{-1}\nonumber\\
    &=\pm\frac{\hbar}{2}\frac{\frac{d}{dt}|\hspace{-1mm}\braket{\psi_0}{\psi_t}\hspace{-1mm}|^2}{\sqrt{|\hspace{-1mm}\braket{\psi_0}{\psi_t}\hspace{-1mm}|^2-|\hspace{-1mm}\braket{\psi_0}{\psi_t}\hspace{-1mm}|^4}}.\tag{S3}
\end{align}
Since $\Delta H_t^{\text{nc}}$ is always positive in Eq.~(S3),
therefore we assume that $p_t=|\hspace{-1mm}\braket{\psi_0}{\psi_t}\hspace{-1mm}|^2$ is monotonically decreasing and hence
\begin{align}\label{exactQSL2d2}
    \Delta H_t^{\text{nc}} = -\frac{\hbar}{2}\frac{\frac{d}{dt} p_t}{\sqrt{p_t-p_t^2}}.\tag{S4}
\end{align}
After integrating both sides of Eq.~(S4) with respect to time,  we obtain 
  \begin{align}\label{exactQSL2d2}
    T\left(\frac{1}{T}\int_{0}^{T}\Delta H_t^{\text{nc}}dt\right) &= -\frac{\hbar}{2}\int_{0}^{T}\frac{\frac{d}{dt} p_t}{\sqrt{p_t-p_t^2}}dt.\tag{S6}
\end{align}
Now defining $\langle\!\langle \Delta H^{\text{\rm nc}}_t \rangle\!\rangle_{T}\equiv\frac{1}{T}\!\int_{0}^{T}\!\!\Delta H_t^{\text{nc}}\,dt$ as the time-average uncertainty of the non-classical part of the driving Hamiltonian in state $|\psi_{t}\rangle$, we obtain
\begin{align}\label{exactQSL2d2}
    T &= -\frac{\hbar}{2\langle\!\langle \Delta H^{\text{\rm nc}}_t \rangle\!\rangle_{T}}\,2 [\arcsin(\sqrt{p_t})]|^{T}_0\nonumber\\
    &=-\frac{\hbar}{\langle\!\langle \Delta H^{\text{\rm nc}}_t \rangle\!\rangle_{T}}\left[ \arcsin(|\hspace{-1mm}\braket{\psi_0}{\psi_T}\hspace{-1mm}|)-\frac{\pi}{2}\right]\nonumber\\
    &=\frac{\hbar}{\langle\!\langle \Delta H^{\text{\rm nc}}_t \rangle\!\rangle_{T}}\,\arccos(|\hspace{-1mm}\braket{\psi_0}{\psi_T}\hspace{-1mm}|)\nonumber\\
    &=\frac{\hbar\, \Theta_{0T}}{\langle\!\langle \Delta H^{\text{\rm nc}}_t \rangle\!\rangle_{T}}\nonumber\tag{S7}
\end{align}

{\it \bf Example 1:} Consider a quantum system consisting of a qubit with an initial state of $|0\rangle$, whose evolution is governed by a non-optimal Hamiltonian $H= \hbar \, \hat{n}\cdot\Vec{\sigma}$, where $\hat{n}=(n_x,n_y,n_z)$ is a unit vector. To determine the quantum speed limit time $T_{\text{QSL}}^{\text{IMT}}$ required to reach the final state $|{\psi_{T}}\rangle=-in_z\ket{0}+(n_y-in_x)\ket{1}$ from the given initial state $|\psi_0\rangle$,
we require the following quantities: $\ket{\psi_t}=(\cos{t}-in_z\sin {t})\ket{0}+(n_y-in_x)\sin{t}\ket{1}$, $\Theta_{\rm 0T}=\arccos(\sqrt{n_z})$ and 
$\langle\!\langle \Delta H^{\text{\rm nc}}_t \rangle\!\rangle_{T}=\frac{2}{\pi}\arccos(\sqrt{n_z})$. After putting these quantities into Eq. (\ref{inexactQSL}), we find that $T_{\text{QSL}}^{\text{IMT}}=\frac{\pi}{2} $, which is equal to the actual evolution time. Hence, the bound~\eqref{inexactQSL} is tight and attainable for the given unitary evolution. As we have assumed a general time-independent Hamiltonian for the qubit system in this example, we can conclude that the bound given by Eq.~\eqref{inexactQSL} is tight for all unitary processes generated by time-independent Hamiltonians in such systems. This example can also be used to verify the claim of theorem~\ref{Th1}.\\

{\it \bf  Example 2:}
Consider a two-qubit system with an initial state $\ket{\psi_0}=\ket{00}$, whose time evolution is governed by a non-optimal Hamiltonian $H=\sigma_x\otimes I+I\otimes \sigma_x$ (we have taken $\hbar=1$). To determine the exact time required to reach the final state $\ket{11}$ from the given initial state $\ket{00}$, we need to calculate $\Delta H_t^{\rm nc}$ and $l(\phi_t)|_0^T$ (as defined in Theorem 2). The time evolved state of the quantum system at any time $t$ under the unitary $U_t=e^{-iHt}=(\cos{t} I-i\sigma_x \sin{t})\otimes(\cos{t} I-i\sigma_x \sin{t})$ is given as 
 \begin{align}
     \ket{\psi_t}=\cos^2{t}\ket{00}-i\sin{t}\cos{t}(\ket{01}+\ket{10})-\sin^2{t} \ket{11}.\tag{S8}
 \end{align}
The basis of the two-qubit Hilbert space is given by {$\ket{a_0}=\ket{00}$, $\ket{a_1}=\ket{01}$, $\ket{a_2}=\ket{10}$, $\ket{a_3}=\ket{11}$}. It is now straightforward to verify that the classical part of the driving Hamiltonian is zero, resulting in $\Delta H=\Delta H^{nc}=\sqrt{2}$. The length traced out by the vector $\ket{\phi_t}$ in the real Hilbert space, connecting the initial state $\ket{00}$ and final state $\ket{11}$, is given by
\begin{align}
l(\phi_t)|^{T}_0=\bigintss_{0}^{T}\!\!\!\sqrt{\sum_{i=0}^{3}\,\Bigl(\frac{d|c_i|}{dt}\Bigr)^2}\;dt
=\frac{\pi}{\sqrt{2}}\,,\tag{S9}
\end{align}
We have used $c_1=\cos^2{t}$, $c_2=c_3=i\sin{t}\cos{t}$, and $c_4=\sin^2{t}$ to determine $l(\phi_t)|_0^T$. The exact time required to reach the target state $\ket{11}$ from the initial state $\ket{00}$ via the given driving Hamiltonian is
\begin{align*}
        T=\frac{ l(\phi_t)\vert^T_0}{\Delta H^{\rm nc}} = \frac{\pi}{2}\tag{S10}.
\end{align*}

{\it \bf Detailed Proof of Corollary 1:}
To show that the bound in Eq.~\eqref{exactQSLdd} is saturated for self-inverse driving Hamiltonians, we first need to calculate $\langle\!\langle \Delta H^{\text{\rm nc}}_t \rangle\!\rangle_{T} = \frac{1}{T}\int_{0}^{T} \!\!\Delta H^{\text{nc}}\,dt$.
Recall
\begin{align}\label{cor1}
     \left(\Delta H^{\text{nc}}\right)^2&=(\Delta H)^2-(\Delta H^{\text{cl}})^2\nonumber\\
     &=\braket{\psi_t}{H^2|\psi_t}-\braket{\psi_t}{(H^{\text{cl}})^2|\psi_t}\tag{S11},
\end{align}
where to arrive at the second expression, we use the fact that $\braket{\psi_t}{H^{\text{cl}}|\psi_t}=\braket{\psi_t}{H|\psi_t}$~\cite{Hall2001}. From the definition of $H^{\text{cl}}$ in Eq.~\eqref{Sam2} where the eigenbasis is $\{\ket{\psi_k}\}_{k=0}^{d-1}$ 
we may write 
\begin{align}\label{cor2}
\braket{\psi_t}{(H^{\text{cl}})^2|\psi_t} = \frac{1}{4}\sum_{k=0}^{d-1}\frac{(\braket{\psi_k}{H|\psi_t}\braket{\psi_t}{\psi_k}+ \text{c.c.})^2}{|\braket{\psi_k}{\psi_t}|^2}\tag{S12},
\end{align}
where c.c.\ denotes the complex conjugate. Therefore, Eq.~(\ref{cor1}) reduces to
\begin{align}\label{cor3}
\left(\Delta H^{\text{nc}}\right)^2 = \braket{\psi_t}{H^2|\psi_t} - \sum_{k=0}^{d-1}\frac{(\braket{\psi_k}{H|\psi_t}\braket{\psi_t}{\psi_k}+ \text{c.c.})^2}{4|\braket{\psi_k}{\psi_t}|^2}.\tag{S13}
\end{align}

Now, for a self-inverse Hamiltonian, i.e., one where $H^2=I$, the unitary evolution operator may be expressed as $U_t=e^{-i{t}H}=\cos({t})I-i\sin({t})H$. Here we have taken $\hbar=1$ and $H$ is dimensionless.

\def\leaveout2{
Since
\begin{align}
\braket{\psi_k}{H|\psi_t}\braket{\psi_t}{\psi_k}+ \text{c.c.}

=2 \, \delta_{k0}\, \langle \psi_0\vert H \vert \psi_0\rangle ,
\end{align}
we find
\begin{align}
\left(\Delta H^{\text{nc}}\right)^2
=\frac{\cos^2 t \,\bigl(1-\langle \psi_0\vert H \vert \psi_0\rangle^2\bigr)}
{1-\sin^2 t\,\bigl(1-\langle \psi_0\vert H \vert \psi_0\rangle^2\bigr)} .\nonumber
\end{align}
}

Substituting $\ket{\psi_t}=U_t\ket{\psi_0}$ into Eq.~(\ref{cor3}) and after simplification we obtain
 \begin{align}\label{cor4}
     \Delta H^{\text{nc}}=\frac{\sqrt{1-\braket{\psi_0}{H|\psi_0}^2}\;\cos t}
     {\!\!\sqrt{1-\bigl({1-\langle\psi_0\vert H\vert\psi_0\rangle^2}\bigr)\sin^2 t}}.\tag{S14}
 \end{align}
Integrating this expression we find
\begin{align}\label{cor5}
     &\;\langle\!\langle \Delta H^{\text{nc}} \rangle\!\rangle_{T}
     = \frac{1}{T}\int_{0}^{T}\!\!\! \Delta H^{\text{nc}} \,dt\nonumber\\
     =&\;\frac{1}{T} \arcsin\left[\sqrt{1-\braket{\psi_0}{H|\psi_0}^2}\hspace{1mm}\sin{T}\right]\nonumber\\
     =&\;\frac{1}{T} \arccos\left[\sqrt{\cos^2 T+\langle\psi_0\vert H\vert\psi_0\rangle^2\sin^2 T}\right]\nonumber\\
    =&\;\frac{1}{T}\,\arccos\bigl(\vert \langle \psi_0\vert\psi_T\rangle\vert\bigr)
    \equiv\frac{1}{T}\,\Theta_{0T},\tag{S15}
 \end{align}
where the identity $\arcsin(x)=\arccos(\sqrt{1-x^2})$ has been used. Substituting Eq.~(\ref{cor5}) into the definition of $T_{\text{QSL}}^{\text{IMT}}$ into Eq. (\ref{inexactQSL}) of the main manuscript, we obtain $T_{\text{QSL}}^{\text{IMT}}=T$ and we see that the bound is saturated.

\end{document}